\theoremstyle{plain}
\numberwithin{equation}{section}
\newtheorem{thm}{Theorem}[section]
\newtheorem{cor}[thm]{Corollary}
\newenvironment{exam}[1]%
{\begin{flushleft}\textbf{Example #1}.\enspace}%
{\end{flushleft}}
\newcounter{cond}
\newcommand{\complex}{{\mathbb C}}
\newcommand{\positive}{{\mathbb N}}
\newcommand{\real}{{\mathbb R}}
\newcommand{\ascript}{{\mathcal A}}
\newcommand{\bscript}{{\mathcal B}}
\newcommand{\cscript}{{\mathcal C}}
\newcommand{\pscript}{{\mathcal P}}
\newcommand{\qscript}{{\mathcal Q}}
\newcommand{\sscript}{{\mathcal S}}
\newcommand{\rmant}{\mathrm{ANT}}
\newcommand{\rmmix}{\mathrm{MIX}}
\newcommand{\rmmat}{\mathrm{MAT}}
\newcommand{\rmm}{\mathrm{M}}
\newcommand{\rma}{\mathrm{A}}
\newcommand{\rmcyl}{\mathrm{cyl}}
\newcommand{\muhat}{\widehat{\mu}}
\newcommand{\offspring}{\!\!\shortrightarrow}
\newcommand{\ab}[1]{\left|#1\right|}
\newcommand{\brac}[1]{\left\{#1\right\}}
\newcommand{\paren}[1]{\left(#1\right)}
\newcommand{\sqbrac}[1]{\left[#1\right]}
\newcommand{\elbows}[1]{{\left\langle#1\right\rangle}}
\newcommand{\ket}[1]{{\left|#1\right>}}
\newcommand{\bra}[1]{{\left<#1\right|}}
\begin{document}

\title{A MATTER OF MATTER\\AND ANTIMATTER
}
\author{S. Gudder\\ Department of Mathematics\\
University of Denver\\ Denver, Colorado 80208, U.S.A.\\
sgudder@du.edu
}
\date{}
\maketitle

\begin{abstract}
A discrete quantum gravity model given by a quantum sequential growth process (QSGP) is considered. The QSGP describes the growth of causal sets (causets) one element at a time in discrete steps. It is shown that the set
$\pscript$ of causets can be partitioned into three subsets $\pscript = (\rmant )\cup (\rmmix )\cup (\rmmat)$ where
$\rmant$ is the set of pure antimatter causets, $\rmmat$ the set of pure matter causets and $\rmmix$ the set of mixed matter-antimatter causets. We observe that there is an asymmetry between $\rmant$ and $\rmmat$ which may explain the matter-antimatter asymmetry of our physical universe. This classification of causets extends to the set of paths $\Omega$ in $\pscript$ to obtain $\Omega  =\Omega ^{\rmant}\cup\Omega ^{\rmmix}\cup\Omega ^{\rmmat}$. We introduce a further classification $\Omega ^{\rmmix}=\Omega _{\rmm}^{\rmmix}\cup\Omega _{\rma}^{\rmmix}$ into matter-antimatter parts. Approximate classical probabilities and quantum propensities for these various classifications are considered. Some conjectures and unsolved problems are presented.
\end{abstract}

\section{Introduction}  
This article is mainly a collection of unsolved problems concerning discrete quantum gravity described by a quantum sequential growth process (QSGP). One reason we cannot solve these problems is that we do not know the classical coupling constants and quantum dynamics of the process. Even if we knew these constants and dynamics, the problems would undoubtedly still be a challenge. Nevertheless, without this complete knowledge we can make some qualitative observations and prove some results that may be useful for this challenge. We also provide some examples that we use as test cases which indicate directions that solutions might take.

A QSGP describes the growth of causal sets (causets) one element at a time in discrete steps. One of the contributions  of this article is to observe that the set $\pscript$ of causets can be partitioned into three subsets
\begin{equation}         
\label{eq11}
\pscript =(\rmant )\cup (\rmmix )\cup (\rmmat )
\end{equation}
where we call $\rmant$ the set of pure antimatter causets, $\rmmat$ the set of pure matter causets and $\rmmix$ the set of mixed matter-antimatter causets. We next observe that there is an asymmetry between $\rmant$ and $\rmmat$. This may help to answer one of the most important questions in modern physics. Why is there a large preponderance of matter over antimatter in our physical universe? The asymmetry is most easily seen by considering the set of sequential paths $\Omega$ in $\pscript$. The elements of $\Omega$ correspond to possible universe histories. The classification of causets \eqref{eq11} can be extended to $\Omega$ to obtain the partition
\begin{equation*}
\Omega =\Omega ^{\rmant}\cup\Omega ^{\rmmix}\cup ^{\rmmat}
\end{equation*}
It turns out that the classical probabilities and quantum propensities of paths in $\Omega ^{\rmant}$ are considerably different than those in $\Omega ^{\rmmat}$. There are indications that $\Omega ^{\rmmix}$ dominates
$\Omega ^{\rmant}$ and $\Omega ^{\rmmat}$ so it is much more likely that our ``actual universe'' is in
$\Omega ^{\rmmix}$. A further classification
$\Omega ^{\rmmix}=\Omega _{\rmm}^{\rmmix}\cup\Omega _{\rma}^{\rmmix}$ partitions $\Omega ^{\rmmix}$ into matter-antimatter parts. Again, there are indications that the propensity of $\Omega _{\rmm}^{\rmmix}$ is considerably larger than the propensity of $\Omega _{\rma}^{\rmmix}$. The ratio of these propensities may be related to (or even equal to) the proportion of matter to antimatter in our universe and this may eventually be experimentally testable.

\section{Quantum Sequential Growth Processes} 
Let $x$ be a finite nonempty set. A \textit{partial order} on $x$ is a relation $<$ on $x$ that satisfies:
\begin{list} {(\arabic{cond})}{\usecounter{cond}
\setlength{\rightmargin}{\leftmargin}}
\item $a\not< a$ for all $a\in M$ (\textit{irreflexivity}).
\item If $a,b,c\in x$ with $a<b$ and $b<c$, then $a<c$ (\textit{transitivity}).
\end{list}
If $<$ is a partial order on $x$ we call $(x,<)$ a \textit{partially ordered set} or (\textit{poset}). We think of $x$ as a finite set of space-time points and $<$ as the causal order. Thus, $a<b$ if $b$ is in the causal future of $a$. For this reason we also call $x$ a \textit{causal set} or \textit{causet} \cite{blms87, sor03}. An element $a\in x$ is \textit{maximal} if there exists no $b\in x$ with $a<b$. For $a,b\in x$ we say that $a$ is an \textit{ancestor} of $b$ and $b$ is a
\textit{successor}  of $a$ if $a<b$. We say that $a$ is a \textit{parent} of $b$ and $b$ is a \textit{child} of $a$ if $a<b$ and there is no $c\in x$ such that $a<c<b$. In this work we only consider unlabeled causets and isomorphic causets are assumed to be identical.

Let $\pscript _n$ be the collection of all causets of cardinality $n$, $n=1,2,\ldots$, and let $\pscript =\cup\pscript _n$. If $x\in\pscript _n$, $y\in\pscript _{n+1}$, then $x$ \textit{produces} $y$ if $y$ is obtained from $x$ by adjoining a single element to $x$ that is maximal in $y$. We also say that $x$ is a \textit{producer} of $y$ and $y$ is an \textit{offspring} of $x$. If $x$ produces $y$ we write $x\to y$. We think of $x$ as ``growing'' into $y$ where the new element is not in the causal past of any element of $y$. We denote the set of offspring of $x$ by $x\offspring$ and for $A\in\pscript _n$ we use the notation
\begin{equation*}
A\to =\brac{y\in\pscript _{n+1}\colon x\to y, x\in A}
\end{equation*}

A \textit{path} in $\pscript$ is a string (sequence) $\omega =\omega _1\omega _2\cdots$ where
$\omega _i\in\pscript _i$ and $\omega _i\to\omega _{i+1}$, $i=1,2,\ldots\,$. An $n$-\textit{path} in $\pscript$ is a finite string $\omega _1\omega _2\cdots\omega _n$ where again $\omega _i\in\pscript _i$ and
$\omega _i\to\omega _{i+1}$. We denote the set of paths by $\Omega$ and the set of $n$-paths by $\Omega _n$. We think of $\omega\in\Omega$ as a possible universal (together with its history) and $\omega\in\Omega _n$ as a possible universe truncated at step $n$. The set of paths with initial $n$-path
$\omega =\omega _1\omega _2\cdots\omega _n\in\Omega _n$ is called an \textit{elementary cylinder set} and is denoted by $\rmcyl (\omega )$. Thus,
\begin{equation*}
\rmcyl (\omega )=\brac{\omega '\in\Omega\colon\omega '
  =\omega _1\omega _2\cdots\omega _n\omega '_{n+1}\omega '_{n+2}\cdots}
\end{equation*}
For an element $A$ of the power set $2^{\Omega _n}$ we define the \textit{cylinder set}
\begin{equation*}
\rmcyl (A)=\cup\brac{\rmcyl (\omega )\colon\omega\in A}
\end{equation*}
Thus, $\rmcyl (A)$ is the set of paths whose initial $n$-paths are elements of $A$. We use the notation
\begin{equation*}
\ascript _n=\brac{\rmcyl (A)\colon A\in 2^{\Omega _n}}
\end{equation*}
It is easy to check that the cylinder sets form an increasing sequence
$\ascript _1\subseteq\ascript _2\subseteq\cdots$ of algebras on $\Omega$ and hence
$\cscript (\Omega )=\cup\ascript _n$ is an algebra of subsets of $\Omega$. Letting $\ascript$ be the $\sigma$-algebra generated by $\cscript (\Omega )$ we have that $(\Omega ,\ascript )$ forms a measurable space.

If $x$ produces $y$ in $r$ isomorphic ways, we say that the \textit{multiplicity} of $x\to y$ is $r$ and we write
$m(x\to y)=r$. For example, in Figure~1, $m(x_3\to x_6)=2$. Let $c=(c_0,c_1,\ldots )$ be a sequence of nonnegative numbers (called \textit{coupling constants} \cite{rs00,vr06}). For $r,s\in\positive$ with $r\le s$, we define
\begin{equation*}
\lambda _c(s,r)=\sum _{k=r}^s\binom{s-r}{k-r}c_k=\sum _{k=0}^{s-r}\binom{s-r}{k}c_{r+k}
\end{equation*}
For $x\in\pscript _n$, $y\in\pscript _{n+1}$ with $x\to y$ we define the \textit{transition probability}
\begin{equation*}
p_c(x\to y)=m(x\to y)\frac{\lambda _c(\alpha ,\pi)}{\lambda _c(n,0)}
\end{equation*}
where $\alpha$ is the number of ancestors and $\pi$ the number of parents of the adjoined maximal element in $y$ that produces $y$ from $x$. It is shown in \cite{rs00, vr06} that $p_c(x\to y)$ is a probability distribution in that it satisfies the Markov-sum rule
\begin{equation*}
\sum\brac{p_c(x\to y)\colon y\in x\offspring}=1
\end{equation*}
The distribution $p_c(x\to y)$ is essentially the most general that is consistent with principles of causality and covariance \cite{rs00,vr06}. It is hoped that other theoretical principles or experimental data will determine the coupling constants. One suggestion is to take $c_k=1/k!$ \cite{sor03}. The case $c_k=c^k$ for some $c>0$ has been previously studied and is called a \textit{percolation dynamics} \cite{hen09, rs00, sur11}.

The set $\pscript$ together with the set of transition probabilities $p_c(x\to y)$ forms a
\textit{classical sequential growth process} (CSGP) which we denote by $(\pscript ,p_c)$ \cite{hen09, rs00, sor03, sur11}. It is clear that $(\pscript ,p_c)$ is a Markov chain and as usual we define the probability of an $n$-path
$\omega =\omega _1\omega _2\cdots\omega _n$ by
\begin{equation*}
p_c^n(\omega )=p(\omega _1\to\omega _2)p(\omega _2\to\omega _3)\cdots p(\omega _{n-1}\to\omega _n)
\end{equation*}
In this way $(\Omega _n,2^{\Omega _n},p_c^n)$ becomes a probability space where we define
\begin{equation*}
p_c^n(A)=\sum\brac{p_c^n(\omega )\colon\omega\in A}
\end{equation*}
for all $A\in 2^{\Omega _n}$. The probability of a causet $x\in\pscript _n$ is
\begin{equation*}
p_c^n(x)=\sum\brac{p_c^n(\omega )\colon\omega\in\Omega _n,\omega _n=x}
\end{equation*}
Of course, $x\mapsto p_c^n(x)$ is a probability measure on $\pscript _n$ and we have
\begin{equation*}
\sum _{x\in\pscript _n}p_c^n(x)=1
\end{equation*}
For $A\in\cscript (\Omega )$ of the form $A=\rmcyl (A_1)$, $A_1\in 2^{\Omega _n}$, we define $p_c(A)=p_c^n(A)$. It is easy to check that $p_c$ is a well-defined probability measure on the algebra $\cscript (\Omega )$. It follows from the Kolmogorov extension theorem that $p_c$ has a unique extension to a probability measure $\nu _c$ on the
$\sigma$-algebra $\ascript$. We conclude that $(\Omega ,\ascript ,\nu _c)$ is a probability space, the increasing sequence of subalgebras $\ascript _n$ generate $\ascript$ and that the restriction $\nu _c\mid\ascript _n=p_c^n$.

We now ``quantize'' the CSGP $(\pscript ,\nu _c)$ to obtain a quantum sequential growth process (QSGP). Let
$H=L_2(\Omega ,\ascript ,\nu _c)$ be the \textit{path Hilbert space} and $H_n=L_2(\Omega ,\ascript _n,p_c^n)$ the
$n$-\textit{path Hilbert space}, $n=1,2,\ldots\,$. Then $H_1\subseteq H_2\subseteq\cdots$ forms an increasing sequence of closed subspaces of $H$. A bounded operator $T$ on $H_n$ will also be considered as a bounded operator on $H$ by defining $Tf=0$ for every $f\in H_n^\perp$. We denote the characteristic function
$\chi _\Omega$ of $\Omega$ by $1$. Of course, $1\in H$, $\|1\|=1$ and $\elbows{1,f}=\int fd\nu _c$ for every 
$f\in H$. A $q$-\textit{probability operator} is a bounded positive operator that satisfies $\elbows{\rho 1,1}=1$. Denote the set of $q$-probability operators on $H$ and $H_n$ by $\qscript (H)$ and $\qscript (H_n)$, respectively. A sequence $\rho _n\in\qscript (H_n)$, $n=1,2,\ldots$, is \textit{consistent} if
\begin{equation*}
\elbows{\rho _{n+1}\chi _B,\chi _A}=\elbows{\rho _n\chi _B,\chi _A}
\end{equation*}
for all $A,B\in\ascript _n$. A consistent sequence $\rho _n\in\qscript (H_n)$, $n=1,2,$, is called a
\textit{discrete quantum process} (DQP). A DQP $\rho _n\in\qscript (H_n)$ on a CSGP $(\Omega ,\nu _c)$ is called a
\textit{quantum sequential growth process} (QSGP) \cite{gud111, gud112}.

A rank $1$ element of $\qscript (H)$ is called a \textit{pure} $q$-\textit{probability operator}. Thus $\rho\in\qscript (H)$ is pure if and only if $\rho$ has the form $\rho =\ket{\psi}\bra{\psi}$ for some $\psi\in H$ satisfying
$\ab{\elbows{1,\psi}}=1$ or equivalently $\ab{\int\psi d\nu _c}=1$. We then call $\psi$ a $q$-\textit{probability vector} and we denote the set of pure $q$-probability operators by $\qscript _p(H)$. A QSGP $\rho _n$ is a
\textit{pure} QSGP if $\rho _n\in\qscript _p (H_n)$, $n=1,2,\ldots\,$.

For $\rho _n\in\qscript (H_n)$ we define the $n$-\textit{decoherence functional}
$D_n\colon\ascript\times\ascript\to\complex$ by
\begin{equation*}
D_n(A,B)=\elbows{\rho _n\chi _B,\chi _A}
\end{equation*}
and this is a measure of the interference between events $A$ and $B$ when the process is described by $\rho _n$
\cite{gud111, gud112}. We define the map $\mu _n\colon\ascript\to\real ^+$ by
\begin{equation*}
\mu _n(A)=D_n(A,A)=\elbows{\rho _n\chi _A,\chi _A}
\end{equation*}
where $\mu _n(A)$ gives the quantum propensity for the occurrence of the event $A$ when the process is described by $\rho _n$. Although $\mu _n$ does not give a probability because it is not additive, it does satisfy the
\textit{grade}-2 \textit{additivity condition}: if $A,B,C\in\ascript$ are mutually disjoint, then
\begin{align}         
\label{eq21}
&\mu _n(A\cup B\cup C)\notag\\
  &=\mu _n(A\cup B)+\mu _n(A\cup C)+\mu _n(B\cup C)-\mu _n(A)-\mu _n(B)-\mu _n(C)
\end{align}
Notice that $\mu _n(\Omega )=1$ and if $\rho _n$ is a DQP then $\mu _{n+1}(A)=\mu _n(A)$ for all $A\in\ascript _n$. Since $\mu _n(A)=\|\rho _n^{1/2}\chi _A\|^2$, we conclude that $\mu _n$ is the squared norm of a vector-valued measure $A\mapsto\rho _n^{1/2}\chi _A$. In particular, if $\rho _n=\ket{\psi _n}\bra{\psi _n}$ is a pure QSGP, then
$\mu _n(A)=\ab{\elbows{\psi _n,\chi _A}}^2$ so $\mu _n$ is the squared modulus of the complex-valued measure
$A\mapsto\elbows{\psi _n,\chi _A}$.

If $\rho _n$ is a QSGP, we say that $A\in\ascript$ is \textit{suitable} if $\lim\mu _n(A)$ exists and is finite and in this case we define $\mu (A)$ to be the limit. We denote the collection of suitable sets by $\sscript (\rho _n)$. If
$A\in\ascript _n$, then $\lim\mu _n(A)=\mu _n(A)$ so $A\in\sscript (\rho _n)$ and $\mu (A)=\mu _n(A)$. In particular,
$\Omega\in\sscript (\rho _n)$ and $\mu (\Omega )=1$. This shows that the algebra
$\cscript (\Omega )\subseteq\sscript (\rho _n)$. In general, $\sscript (\rho _n)\ne\ascript$ and $\mu$ does not have a well-behaved extension from $\cscript (\Omega )$ to all of $\ascript$ \cite{djs10, gud111,hen09}. A subset $\bscript$ of
$\ascript$ is a \textit{quadratic algebra} if $\phi, \Omega\in\bscript$ and whenever $A,B,C\in\bscript$ are mutually disjoint with $A\cup B,A\cup C,B\cup C\in\bscript$, we have $A\cup B\cup C\in\bscript$. For a quadratic algebra
$\bscript$, a $q$-\textit{measure} is a map $\mu _0\colon\bscript\to\real ^+$ that satisfies the grade-2 additivity condition \eqref{eq21}. Of course, an algebra of sets is a quadratic algebra and we conclude that
$\mu _n\colon\ascript\to\real ^+$ is a $q$-measure. It is not hard to show that $\sscript (\rho _n)$ is a quadratic algebra and $\mu\colon\sscript (\rho _n)\to\real ^+$ is a $q$-measure that extends
$\mu\colon\cscript (\Omega )\to\real ^+$.

For a QSGP $\rho _n$, we call $\rho _n$ the \textit{local operators} and $\mu _n$ the \textit{local}
$q$-\textit{measures} for the process. If $\rho =\lim\rho _n$ exists in the strong operator topology, then
$\rho\in\qscript (H)$ and we call $\rho$ the \textit{global operator} for the process. If the global operator $\rho$ exists, then $\muhat (A)=\elbows{\rho\chi _A,\chi _A}$ is a (continuous) $q$-measure on $\ascript$ that extends $\mu _n$, $n=1,2,\ldots\,$. Unfortunately, the global operator does not exist in general, so we must work with the local operators \cite{djs10, gud111}. In this case, we still have the $q$-measure $\mu$ on the quadratic algebra
$\sscript (\rho _n)\subseteq\ascript$ that extends $\mu _n$, $n=1,2,\ldots\,$.

As with the coupling constants $c_n$ for a CSGP we will need additional theoretical principles or experimental data to determine the local operators $\rho _n\in\qscript (H_n)$ for a QSGP. However, we can still make some observations even with our limited knowledge. For example, suppose we are interested in the quantum propensity $\mu (A)$ of the event $A\in\sscript (\rho _n)$. Assume it is known that the classical probability $\nu _c(A)$ of $A$ occurring is small or even zero. Then the vector $\chi _A\in H$ has small norm $\|\chi _A\|=\nu _c(A)^{1/2}$. We then have
\begin{equation*}
\mu _n(A)=\elbows{\rho _n\chi _A,\chi _A}\le\|\rho _n\|\,\|\chi _A\|^2=\|\rho _n\|\nu _c(A)
\end{equation*}
If $\nu _c(A)=0$ then $\mu _n(A)=0$, $n=1,2,\ldots$, so $\mu (A)=0$. If $\|\rho _n\|$ are uniformly bounded
$\|\rho _n\|\le M$, $n=1,2,\ldots$, (which frequently happens) then $\mu (A)\le M\nu _c(A)$ which is small if $M$ is reasonable. Finally, if $A$ happens to be in $\ascript _n$ for some $n$, then
\begin{equation*}
\mu (A)=\mu _n(A)\le\|\rho _n\|\nu _c(A)
\end{equation*}
which again is small if $\|\rho _n\|$ is reasonable.

This section closes with a simple method for constructing a QSGP from a CSGP $(\pscript ,p_c)$. Although there are more general methods \cite{gud112}, the present one is instructive because it generates a quantum Markov chain. For all $x\in\pscript _n$, $y\in\pscript _{n+1}$ with $x\to y$, let $\alpha (x\to y)\in\complex$ satisfy
\begin{equation*}
\sum _{y\in x\offspring}\alpha (x\to y)p_c(x\to y)=1
\end{equation*}
We call $\alpha (x\to y)$ a \textit{transition amplitude from} $x$ \textit{to} $y$ and for\newline
$\omega =\omega _1\omega _2\cdots\omega _n\in\Omega _n$ we define the \textit{amplitude}
\begin{equation*}
\alpha (\omega )=\alpha (\omega _1\to\omega _2)\alpha (\omega _2\to\omega _3)
   \cdots\alpha (\omega _{n-1}\to\omega _n)
\end{equation*}
We next introduce the vector $\psi _n\in H$ given by
\begin{equation*}
\psi _n=\sum _{\omega\in\Omega _n}\alpha (\omega )\chi _{\rmcyl (\omega )}
\end{equation*}
It is important to notice that $\psi _n\in H_n$, $n=1,2,\ldots\,$.

\begin{thm}       
\label{thm21}
The sequence $\rho _n=\ket{\psi _n}\bra{\psi _n}$ gives a pure QSGP.
\end{thm}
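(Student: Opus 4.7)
The plan is to verify the two defining requirements of a pure QSGP separately: (i) each $\rho_n$ lies in $\qscript_p(H_n)$, and (ii) the sequence $\rho_n$ is consistent. Since $\rho_n = \ket{\psi_n}\bra{\psi_n}$ is manifestly a positive rank-one operator on $H$, and since $\psi_n$ is a finite linear combination of $\chi_{\rmcyl(\omega)}$ with $\omega \in \Omega_n$ so that $\psi_n \in H_n$, the only nontrivial point for (i) is the normalization $\elbows{\rho_n 1,1} = \ab{\elbows{1,\psi_n}}^2 = 1$.

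The key computation, which I would carry out first, is that for $\omega \in \Omega_n$ the cylinder sets $\rmcyl(\omega)$ are pairwise disjoint with $\nu_c(\rmcyl(\omega)) = p_c^n(\omega)$. Therefore
\begin{equation*}
\elbows{1,\psi_n} = \int \psi_n\,d\nu_c = \sum_{\omega \in \Omega_n}\alpha(\omega)\,p_c^n(\omega).
\end{equation*}
Since $\alpha(\omega)p_c^n(\omega) = \prod_{i=1}^{n-1}\alpha(\omega_i\!\to\!\omega_{i+1})\,p_c(\omega_i\!\to\!\omega_{i+1})$, I would sum over $\omega_n, \omega_{n-1}, \ldots, \omega_2$ in succession, using the defining relation $\sum_{y \in x\offspring} \alpha(x\to y)p_c(x\to y) = 1$ to collapse each inner sum to $1$. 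Noting that $\Omega_1$ has the unique causet of cardinality one, the telescoping yields $\elbows{1,\psi_n} = 1$, establishing both the normalization and the fact that $\psi_n$ is a $q$-probability vector.

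For (ii), it suffices to prove $\elbows{\psi_{n+1},\chi_A} = \elbows{\psi_n,\chi_A}$ for every $A \in \ascript_n$, since then
\begin{equation*}
\elbows{\rho_{n+1}\chi_B,\chi_A} = \elbows{\psi_{n+1},\chi_A}\overline{\elbows{\psi_{n+1},\chi_B}} = \elbows{\psi_n,\chi_A}\overline{\elbows{\psi_n,\chi_B}} = \elbows{\rho_n\chi_B,\chi_A}.
\end{equation*}
By linearity, I would reduce to the case $A = \rmcyl(\omega)$ with $\omega \in \Omega_n$. Disjointness of distinct cylinder sets in $\Omega_n$ gives $\elbows{\psi_n,\chi_{\rmcyl(\omega)}} = \overline{\alpha(\omega)}\,p_c^n(\omega)$. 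For $\psi_{n+1}$, the only nonzero contributions come from those $\omega'' \in \Omega_{n+1}$ extending $\omega$, i.e.\ $\omega'' = \omega_1\cdots\omega_n\omega''_{n+1}$ with $\omega_n \to \omega''_{n+1}$. Factoring $\alpha$ and $p_c^{n+1}$ multiplicatively and pulling out the $\omega$-dependent part yields
\begin{equation*}
\elbows{\psi_{n+1},\chi_{\rmcyl(\omega)}} = \overline{\alpha(\omega)}\,p_c^n(\omega)\sum_{\omega''_{n+1}\colon \omega_n\to\omega''_{n+1}}\overline{\alpha(\omega_n\to\omega''_{n+1})}\,p_c(\omega_n\to\omega''_{n+1}),
\end{equation*}
and the remaining sum is the complex conjugate of the amplitude normalization, hence equals $1$. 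This gives the required equality.

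There is no serious obstacle; the proof is essentially two applications of the defining identity $\sum_y \alpha(x\to y)p_c(x\to y) = 1$, once telescoped over the full path to establish normalization, and once on the last step to establish consistency. The only bookkeeping care needed is to keep track of complex conjugation in the $H = L_2$ inner product and to use the disjointness of distinct $n$-cylinders so that the $\nu_c$-integrals of products $\chi_{\rmcyl(\omega)}\chi_{\rmcyl(\omega')}$ reduce to $\delta_{\omega,\omega'}\,p_c^n(\omega)$.
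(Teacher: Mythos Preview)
Your proposal is correct and follows essentially the same route as the paper: normalization is obtained by telescoping the product $\prod_i \alpha(\omega_i\to\omega_{i+1})p_c(\omega_i\to\omega_{i+1})$ via the defining identity, and consistency is reduced to showing $\elbows{\psi_{n+1},\chi_{\rmcyl(\omega)}}=\elbows{\psi_n,\chi_{\rmcyl(\omega)}}$ for $\omega\in\Omega_n$ by peeling off the last step of the path. Your care with complex conjugation is a slight improvement in presentation over the paper, but the argument is the same.
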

\begin{proof}
To show that $\psi _n$ is a $q$-probability vector we have
\begin{align*}
\elbows{1,\psi _n}
  &=\elbows{\sum _{\omega '\in\Omega _n}\chi _{\rmcyl (\omega ')},
  \sum _{\omega\in\Omega _n}\alpha (\omega )\chi _{\rmcyl (\omega )}}
  =\sum _{\omega\in\Omega _n}\alpha (\omega )p_c^n(\omega )\\
  &=\sum _{\omega\in\Omega _n}\alpha (\omega _1\to\omega _2)p_c(\omega _1\to\omega _2)
  \cdots\alpha (\omega _{n-1}\to\omega _n)p_c^n(\omega _{n-1}\to\omega _n)\\
  &=\sum _{\omega\in\Omega _{n-1}}\!\!\alpha (\omega _1\to\omega _2)p_c(\omega _1\to\omega _2)
  \cdots\alpha (\omega _{n-2}\to\omega _{n-1})p_c^n(\omega _{n-2}\to\omega _{n-1})\\
  &\qquad\vdots\\
  &=\sum _{\omega\in\Omega _2}\alpha (\omega _1\to\omega _2)p_c^n(\omega _1\to\omega _2)=1
\end{align*}
To show that $\rho _n$ is a consistent sequence we use the notation $\omega x$ for\newline
$\omega _1\omega _2\cdots\omega _nx\in\Omega _{n+1}$ where
$\omega =\omega _1\omega _2\cdots\omega _n\in\Omega _n$ and $x\in\pscript _{n+1}$ with $\omega _n\to x$ to obtain
\begin{align*}
\sum _{x\in\omega _n\offspring}\elbows{\chi _{\rmcyl (\omega x)},\psi _{n+1}}
  &=\sum _{x\in\omega _n\offspring}
  \elbows{\chi _{\rmcyl (\omega x)},\sum _{\omega '\in\Omega _{n+1}}\alpha (\omega ')\chi _{\rmcyl (\omega ')}}\\
  &=\sum _{x\in\omega _n\offspring}\alpha (\omega x)p_c^n(\omega x)\\
  &=\sum _{x\in\omega _n\offspring}\alpha (\omega )\alpha (\omega _n\to x)p_c^n(\omega )p_c(\omega _n\to x)\\
  &=\alpha (\omega )p_c^n(\omega )=\elbows{\chi _{\rmcyl (\omega )},\psi _n}
\end{align*}
For $\omega ,\omega '\in\Omega _n$, it follows that
\begin{align*}
D_{n+1}\paren{\rmcyl (\omega ),\rmcyl (\omega ')}
  &=\elbows{\ket{\psi _{n+1}}\bra{\psi _{n+1}}\chi _{\rmcyl (\omega ')},\chi _{\rmcyl (\omega )}}\\
  &=\elbows{\chi _{\rmcyl (\omega ')},\psi _{n+1}}\elbows{\psi _{n+1},\chi _{\rmcyl (\omega )}}\\
  &=\sum _{x'\in\omega '_n\offspring}\elbows{\chi _{\rmcyl (\omega 'x')}\psi _{n+1}}
  \sum _{x\in\omega _n\to x}\elbows{\psi _{n+1},\chi _{\rmcyl (\omega x}}\\
  &=\elbows{\chi _{\rmcyl (\omega ')},\psi _n}\elbows{\psi _n,\chi _{\rmcyl (\omega )}}\\
  \noalign{\smallskip}
  &=D_n\paren{\rmcyl (\omega ),\rmcyl (\omega ')}
\end{align*}
For $A,B\in\ascript _n$ we have
\begin{equation*}
D_n(A,B)=\sum\brac{D_n\paren{\rmcyl (\omega ),\rmcyl (\omega ')}\colon\omega ,\omega '\in\ascript _n,
  \rmcyl (\omega )\subseteq A,\rmcyl (\omega ')\subseteq B}
\end{equation*}
and the result follows.
\end{proof}

The decoherence functional $D_n\colon\ascript\times\ascript\to\complex$ and the $q$-measure
$\mu _n\colon\ascript\to\real ^+$ corresponding to $\rho _n=\ket{\psi _n}\bra{\psi _n}$ are given by
\begin{align*}
D_n(A,B)&=\elbows{\psi _n,\chi _A}\elbows{\chi _B,\psi _n}\\
\mu _n(A)&=\ab{\elbows{\chi _A,\psi _n}}^2
\end{align*}
Notice that $\mu _n$ is the modulus squared of the complex measure $\lambda _n(A)=\elbows{\chi _A,\psi _n}$. Moreover, we have for all $A\in\ascript$ that
\begin{align*}
\lambda _n(A)&=\elbows{\chi _A,\sum _{\omega\in\Omega _n}\alpha (\omega )\chi _{\rmcyl (\omega )}}
  =\sum _{\omega\in\Omega _n}\alpha (\omega )\elbows{\chi _A,\chi _{\rmcyl (\omega )}}\\
  &=\sum _{\omega\in\Omega _n}\alpha (\omega )\nu _c\sqbrac{A\cap\rmcyl (\omega )}
\end{align*}
In particular, if $A\in\ascript$ then $A=\rmcyl (A_1)$ for some $A_1\in 2^{\Omega _n}$ and we have
\begin{equation*}
\lambda _n(A)=\sum _{\omega\in A_1}\alpha (\omega )p_c^n(\omega )
\end{equation*}
Moreover, for $\omega =\omega _1\omega _2\cdots\in\Omega$ we obtain
\begin{equation*}
\mu _n\paren{\brac{\omega}}
  =\ab{\alpha (\omega _1\omega _2\cdots\omega _n)p_c^n(\omega _1\omega _2\cdots\omega _n)}^2
\end{equation*}

\begin{exam}{1}
The simplest transition amplitude is $\alpha (x\to y)=1$ for every $x$ and $y$ with $x\to y$. Then
$\alpha (\omega )=1$ for every $\omega\in\Omega _n$ and
\begin{equation*}
\psi _n=\sum _{\omega\in\Omega _n}\chi _{\rmcyl (\omega )}=1
\end{equation*}
$n=1,2,\ldots\,$. The corresponding decoherence functional becomes
\begin{equation*}
D_n(A,B)=\elbows{1,\chi _a}\elbows{\chi _B,1}=\nu _c(A)\nu _c(B)
\end{equation*}
and $\mu _n(A)=\nu _c(A)^2$. Thus, $\sscript (\rho _n)=\ascript$ and $\mu (A)=\nu _c(A)^2$ is the classical probability squared. Moreover, $\ket{1}\bra{1}$ is the global operator.
\end{exam}

\begin{exam}{2}
Assume that $p(x\to y)\ne 0$ for every $x$ and $y$ with $x\to y$ and define the transition amplitude
\begin{equation*}
\alpha (x\to y)=\frac{1}{p(x\to y)\ab{x\offspring}}
\end{equation*}
where $\ab{x\offspring}$ is the cardinality of $x\offspring$. For
$\omega =\omega _1\omega _2\cdots\omega _n\in\Omega _n$, letting
\begin{equation*}
\beta _n (\omega )=\frac{1}{\ab{\omega _1\offspring}\ab{\omega _2\offspring}\cdots\ab{\omega _{n-1}\offspring}}
\end{equation*}
we have $\alpha (\omega )=1/p_c^n(\omega )\beta _n(\omega )$. If $A\in 2^{\Omega _n}$, then
\begin{equation*}
\mu\paren{\rmcyl (A)}=\mu _n\paren{\rmcyl (A)}=\ab{\sum _{\omega\in A}\beta _n(\omega )}^2
\end{equation*}
If $\omega =\omega _1\omega _2\cdots\in\Omega$ then $\mu _n\paren{\brac{\omega}}=\beta _n(\omega )^2$. Since
$\lim\beta _n(\omega )=0$, we conclude that $\brac{\omega}\in\sscript (\rho _n)$ and $\mu\paren{\brac{\omega}}=0$. It follows that if $A\subseteq\Omega$ with $\ab{A}<\infty$, then $A\in\sscript (\rho _n)$ and $\mu (A)=0$. We conjecture that $\sscript (\rho _n)\ne\ascript$ and there is no global operator.
\end{exam}

\section{Matter-Antimatter} 
Figure~1 illustrates the first four steps of a CSGP. The first four levels are complete but for lack of space, level five is not. The numbers on the arrows designate the multiplicity of the transition. Except for $x_1$ we classify the causets in terms of three types. The vertical rectangles on the left correspond to antimatter causets, the circles in the middle correspond to mixed coasts and the horizontal rectangles on the right correspond to matter causets. We whimsically call $x_1$ ``neutrino,'' $x_2$ ``positron,'' $x_3$ ``electron'' and $x_4$--$x_8$ 	``quarks'' (there are six quarks, counting multiplicity). In this way of thinking about matter-antimatter, we view a causet not only as a scaffolding for the geometry of a universe but also a placement of masses. Although there is some symmetry between antimatter and matter causets, if we take multiplicity into account there is a definite asymmetry. Since there is much more multiplicity on the matter side, this already indicates that this side might be more probable, but we shall discuss this later. We first present a rigorous definition of our classification scheme.

\includegraphics*[trim= 0 0 30 90, scale=.75, angle=90]{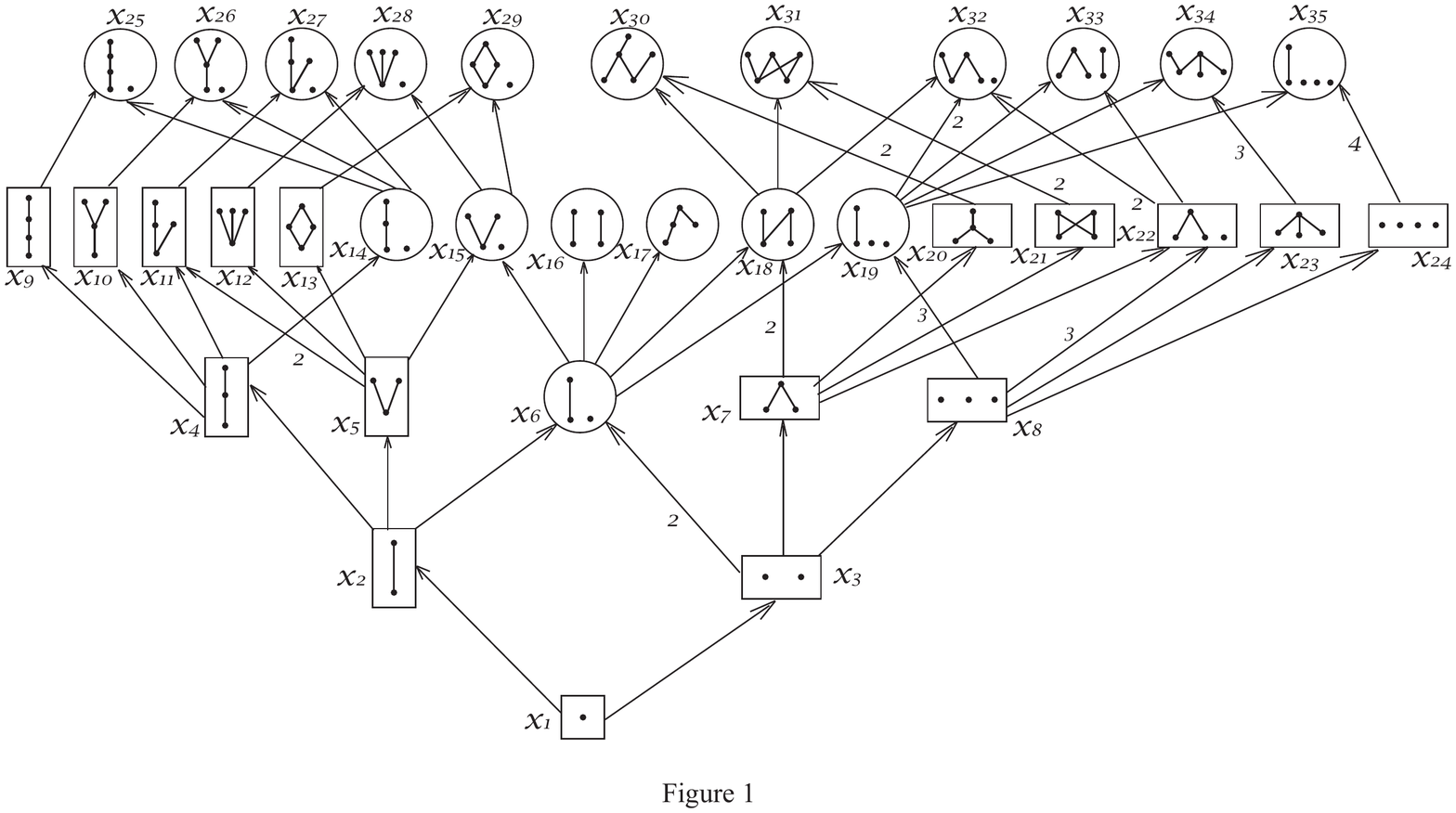}

In our classification scheme we consider $x_2$ as the source of antimatter and $x_3$ as the source of matter. We call a causet $y$ a \textit{product} of a causet $x$ if $x\ne x_1$ and either $y=x$ or there is a path containing $x$ and subsequently $y$. If $x$ is a product of $x_2$ and not $x_3$, then $x$ is an \textit{antimatter causet}. If $x$ is a product of $x_3$ and not $x_2$, then $x$ is a \textit{matter causet}. If $x$ is a product of $x_6$, then $x$ is a
\textit{mixed causet}. Notice that $x_1$ is not classified. We denote the set of antimatter, matter and mixed causets by $\rmant$, $\rmmat$ and $\rmmix$, respectively. We also use the notation $\rmant _n=\rmant\cap\pscript _n$,
$\rmmat _n=\rmmat\cap\pscript _n$, $\rmmix _n=\rmmix\cap\pscript _n$, $n=1,2,\ldots\,$. As examples we have
$x_2\in\rmant _2$, $x_3\in\rmmat _2$, $x_4,x_5\in\rmant _3$, $x_6\in\rmmix _3$, $x_7, x_8\in\rmmat _3$.

We call an element of a causet a \textit{vertex}. A vertex $a\in x$ is \textit{minimal} if there is no $b\in x$ with $b<a$. A vertex is a \textit{single-parent child} if it is the child of precisely one parent and is a \textit{multiple-parent child} if it is the child of more than one parent. No matter how complicated it is, the next result gives a method of quickly determining whether a causet is in $\rmant$, $\rmmat$ or $\rmmix$.

\begin{thm}       
\label{thm31}
{\rm (i)}\enspace A causet $x$ is a product of $x_2$ if and only if $x$ has a minimal vertex with a single-parent child.
{\rm (ii)}\enspace A causet $x$ is a product of $x_3$ if and only if $x$ has at least two minimal vertices.
{\rm (iii)}\enspace $x\in\rmmix$ if and only if $x$ has two (or more) minimal vertices at least one of which has a
single-parent child.
{\rm (iv)}\enspace $x\in\rmant$ if and only if $x$ has only one minimal vertex and $\ab{x}\ge 2$.
{\rm (v)}\enspace $x\in\rmmat$ if and only if $x$ has at least two minimal vertices and no minimal vertex has a
single-parent child
\end{thm}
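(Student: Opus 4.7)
The entire proof rests on a simple invariance of the growth relation: when a new maximal element is adjoined to a causet $x$ to form $y$, every pre-existing vertex $v\in x$ has the same down-set in $y$ as in $x$, hence the same ancestors, the same parents, and the same minimal-or-not status. Consequently, every minimal vertex of $x$ remains minimal in every descendant, and every single-parent-child relation of $x$ is preserved in every descendant. Since $x_2$ (the $2$-chain) already satisfies the condition in (i) and $x_3$ (the $2$-antichain) already satisfies the condition in (ii), the ``only if'' directions of (i) and (ii) follow immediately by induction along any path through $x_2$ or $x_3$.

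The converses of (i) and (ii) are handled via explicit linear extensions of $x$. For (ii), if $a_1,a_2$ are distinct minimal vertices of $x$, any linear extension of $x$ beginning $a_1,a_2,\ldots$ exists (minimal vertices have empty down-set), and its first two steps produce $x_3$. For (i), if $a$ is a minimal vertex of $x$ with single-parent child $b$, then $a$ is the only parent of $b$ and $a$ has no ancestors, so $a$ is the only ancestor of $b$ in $x$; hence $b$ is minimal in $x\setminus\brac{a}$ and any linear extension of $x$ beginning $a,b,\ldots$ exists, with the first two steps producing $x_2$. In both cases the remaining vertices are added in any compatible order to complete the growth.

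Parts (iv) and (v) follow by combining (i), (ii), and the definitions $\rmant=(\text{products of }x_2)\setminus(\text{products of }x_3)$ and $\rmmat=(\text{products of }x_3)\setminus(\text{products of }x_2)$. The only nontrivial point is one direction of (iv): I must show that if $x$ has a unique minimal vertex $a$ and $\ab{x}\ge 2$, then $a$ automatically has a single-parent child. Let $b$ be any child of $a$ (such $b$ exists because $\ab{x}\ge 2$ forces $a$ to have a successor and hence a child). Any parent $c$ of $b$ with $c\ne a$ would be non-minimal, so $a<c$; but then $a<c<b$ contradicts that $b$ is a child of $a$, so $a$ is the unique parent of $b$.

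For (iii) I would first prove $\rmmix=(\text{products of }x_2)\cap(\text{products of }x_3)$: the inclusion $\subseteq$ holds because $x_6$ is itself a product of both $x_2$ and $x_3$, and for the reverse inclusion, if $x$ lies in both sets then (i) and (ii) supply distinct minimal vertices $a,a'$ of $x$ with $b$ a single-parent child of $a$, and a linear extension of $x$ starting $a,b,a',\ldots$ produces $x_6$ at step three (the minimality of $a'$ in $x$ forces it to be adjoined as isolated to the chain $a<b$). Statement (iii) is then the conjunction of (i) and (ii). The only real care required throughout is in the unlabeled-causet identifications, namely checking that the prescribed two- and three-step prefixes really are $x_2$ and $x_6$ up to isomorphism; this is forced by the invariance in the opening paragraph, so I foresee no deeper obstacle.
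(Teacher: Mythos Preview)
Your proof is correct and follows essentially the same strategy as the paper: both rest on the observation that adjoining a maximal element preserves minimality and single-parent-child relations among the old vertices, and both reduce the converses to reaching $x_2$, $x_3$, or $x_6$ along some growth sequence. The only cosmetic difference is directional---where the paper repeatedly deletes a maximal vertex distinct from the distinguished ones until only $x_2$, $x_3$, or $x_6$ remains and then reverses the process, you instead build forward by choosing a linear extension that begins with the distinguished vertices; these are dual descriptions of the same construction. Your argument for the nontrivial half of (iv) (a second parent $c$ of $b$ would force $a<c<b$) is in fact a little cleaner than the paper's version.
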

\begin{proof}
(i)\enspace $x_2$ has a minimal vertex with a single-parent child and adjoining a maximal vertex to $x_2$ does not change this fact. If $x$ is a product of $x_2$ it follows by induction that $x$ has a minimal vertex with a single-parent child. Conversely, suppose $x$ has a minimal vertex $a$ with a single-parent child $b$. If $x=x_2$ then $x$ is a product of $x_2$ and we are finished. Otherwise, $x$ has a maximal vertex $c\ne a,b$. We remove $c$ and we still have $a$ and $b$ in the remaining causet. We continue this process until we obtain $x_2=\brac{a,b}$. Reversing this process gives a path containing $x_2$ and subsequently $x$. Hence, $x$ is a product of $x_2$.\newline
(ii)\enspace $x_3$ has two minimal vertices and adjoining a maximal vertex to $x_3$ does not change this fact. As in (i), we conclude that if $x$ is a product of $x_3$, then $x$ has at least two minimal vertices. Conversely, suppose $x$ has two minimal vertices $c\ne a,b$. As before, we remove $c$ and continue this process until we obtain
$x_3=\brac{a,b}$. As in (i) we conclude that $x$ is a product of $x_3$.\newline
(iii)\enspace If $x\in\rmmix$ then $x$ is a product of $x_6$ and hence $x$ is a product of both $x_2$ and $x_3$. Applying (i) and (ii) we conclude that $x$ has two minimal vertices at least one of which has a single-parent child. Conversely, suppose $x$ has two minimal vertices $a,b$ and $a$ has a single-parent child $c$. If $c$ and $b$ are the only maximal vertices of $x$, then $x=x_6$ and we are finished. Otherwise, $x$ has a maximal vertex other than $c$ or $b$ and we remove it. The resulting causet $y$ still has the minimal vertices $a,b$ and $a$ still has the
single-parent child. Continue this process until we arrive at $x_6$. Reversing the process shows that $x$ is a product of $x_6$.\newline
(iv)\enspace If $x\in\rmant$ then $\ab{x}\ge 2$ and $x\notin\rmmat$ so by (ii) $x$ has only one minimal vertex. Conversely, suppose $x$ has only one minimal vertex $a$ and $\ab{x}\ge 2$. By (ii) $x\notin\rmmat$. Now $a$ must have a child $b$ because otherwise, since $\ab{x}\ge 2$ there would be another minimal vertex. If $b$ had a parent
$c$ with $c\ne a$, then $c$ is either minimal or there is a minimal element $d$ having $c$ as a product. The first case contradicts the fact that $a$ is the only minimal vertex. In the second case, $d\ne a$ because $b$ is a child of $a$ and not of $d$. This again contradicts the fact that $a$ is the only minimal vertex. Hence, $b$ is a single-parent child and by (i), $x\in\rmant$.\newline
(v)\enspace If $x\in\rmmat$, then $x\notin\rmant$ so by (iv) $x$ must have at least two minimal vertices. Moreover, by (i) no minimal vertex has a single-parent child. Conversely, suppose $x$ has two minimal vertices and no minimal vertex has a single-parent child. By (iv), $x\notin\rmant$. By (ii), $x$ is a product of $x_3$ so $x\in\rmmat$.
\end{proof}

\begin{cor}       
\label{cor32}
A causet $x\in\rmmix$ is and only if $x$ is a product of $x_2$ and $x_3$.
\end{cor}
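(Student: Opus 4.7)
The plan is to derive the corollary as an immediate consequence of parts (i), (ii), and (iii) of Theorem~\ref{thm31}, with no additional combinatorial work needed. The statement has the logical form ``$x\in\rmmix$ iff $x$ is a product of $x_2$ \emph{and} a product of $x_3$,'' and each of the three cited clauses gives a vertex-theoretic reformulation of exactly one of these three predicates.

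For the forward direction, I would assume $x\in\rmmix$ and apply clause (iii) of Theorem~\ref{thm31} to extract two facts: $x$ has at least two minimal vertices, and at least one of them has a single-parent child. The first fact, fed into clause (ii), yields that $x$ is a product of $x_3$; the second fact, fed into clause (i), yields that $x$ is a product of $x_2$. Thus $x$ is a product of both.

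For the converse, if $x$ is a product of $x_2$, then by (i) it has a minimal vertex $a$ with a single-parent child, and if in addition $x$ is a product of $x_3$, then by (ii) it has at least two minimal vertices; in particular $a$ is one of them. This is exactly the hypothesis of (iii), so $x\in\rmmix$. (It is worth noting explicitly, to avoid a degenerate case, that $x_1$ has neither two minimal vertices nor a single-parent child, so the corollary correctly excludes $x_1$ from $\rmmix$ on both sides.)

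There is essentially no obstacle: the content of the corollary is already encoded in the vertex characterizations of Theorem~\ref{thm31}. The only thing one has to take minor care with is that the ``at least one of which has a single-parent child'' phrasing in (iii) and the ``a minimal vertex with a single-parent child'' phrasing in (i) refer to the same object, which is immediate once one observes that any minimal vertex with a single-parent child counts as one of the ``two or more'' minimal vertices in (iii).
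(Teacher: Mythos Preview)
Your proposal is correct and essentially matches the paper's proof for the converse direction, which likewise invokes Theorem~\ref{thm31} (i), (ii), (iii). For the forward direction the paper is even terser: it simply notes that $x\in\rmmix$ means $x$ is a product of $x_6$, and since $x_6$ is itself an offspring of both $x_2$ and $x_3$, transitivity of ``product'' gives the conclusion directly without passing through the vertex characterizations; your route via (iii) then (i), (ii) is a valid but slightly longer alternative.
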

\begin{proof}
If $x\in\rmmix$ then clearly, $x$ is a product of $x_2$ and $x_3$. Conversely, suppose $x$ is a product of $x_2$ and $x_3$. By Theorem~\ref{thm31} (i), (ii) and (iii), $x\in\rmmix$.
\end{proof}

\begin{cor}       
\label{cor33}
{\rm (i)}\enspace $\rmant _n$, $\rmmat _n$ and $\rmmix _n$ are mutually disjoint and
\begin{equation*}
\pscript _n=\rmant _n\cup\rmmix _n\cup\rmmat _n
\end{equation*}
{\rm (ii)}\enspace $\rmant$, $\rmmat$ and $\rmmix$ are mutually disjoint and
\begin{equation*}
\pscript =\rmant\cup\rmmix\cup\rmmat
\end{equation*}
\end{cor}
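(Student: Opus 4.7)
The plan is to derive both (i) and (ii) directly from the characterizations given in Theorem~\ref{thm31}, parts (iii), (iv), (v). These characterizations phrase membership in $\rmant$, $\rmmat$, $\rmmix$ purely in terms of two discrete features of a causet $x$: the number of minimal vertices, and whether any minimal vertex has a single-parent child. This reduces the corollary to a finite case analysis on those features.

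First I would prove the disjointness claims in (i). By Theorem~\ref{thm31}(iv), every $x\in\rmant$ has exactly one minimal vertex, while by (iii) and (v), every element of $\rmmix\cup\rmmat$ has at least two; hence $\rmant_n$ is disjoint from both $\rmmix_n$ and $\rmmat_n$. To separate $\rmmix_n$ from $\rmmat_n$, note that by (iii) every causet in $\rmmix$ has some minimal vertex with a single-parent child, whereas by (v) no causet in $\rmmat$ does; so these classes are disjoint as well. The disjointness in (ii) then follows by taking unions over $n$.

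Next I would establish the covering statement $\pscript_n=\rmant_n\cup\rmmix_n\cup\rmmat_n$ (for $n\ge 2$; the $n=1$ case consists only of $x_1$ and is treated as excluded from the classification, as stated in Section~3). Given $x\in\pscript_n$ with $|x|\ge 2$, count its minimal vertices. If there is exactly one, then by Theorem~\ref{thm31}(iv) we have $x\in\rmant_n$. If there are two or more, consider whether any minimal vertex has a single-parent child: if some does, then $x\in\rmmix_n$ by (iii); if none does, then $x\in\rmmat_n$ by (v). In every case $x$ lies in one of the three sets, proving the union equality in (i). Taking the union over $n$ yields (ii).

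I do not anticipate a genuine obstacle here, since Theorem~\ref{thm31} has already done all the combinatorial work; the only point requiring a bit of care is the trivial observation that having ``exactly one minimal vertex'' and ``at least two minimal vertices'' are mutually exclusive and jointly exhaustive (modulo the $|x|\ge 2$ hypothesis needed to apply (iv)), and that within the ``at least two'' case the subdivision according to the single-parent-child property is likewise exclusive and exhaustive. The proof is therefore essentially a bookkeeping argument built on Theorem~\ref{thm31}.
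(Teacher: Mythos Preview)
Your argument is correct. The paper's own proof is the single line ``This follows from Corollary~\ref{cor32},'' which packages the same content slightly differently: once Corollary~\ref{cor32} identifies $\rmmix$ with the causets that are products of \emph{both} $x_2$ and $x_3$, the original definitions of $\rmant$ (product of $x_2$ but not $x_3$) and $\rmmat$ (product of $x_3$ but not $x_2$) make disjointness immediate, and the covering follows because every $x$ with $\ab{x}\ge 2$ lies on some path through either $x_2$ or $x_3$. Your route instead invokes the structural characterizations in Theorem~\ref{thm31}(iii)--(v) directly and runs a case analysis on the number of minimal vertices and the single-parent-child property. Both approaches rest entirely on Theorem~\ref{thm31}; the paper's is terser but leaves the reader to unpack Corollary~\ref{cor32} and the definitions, while yours is more self-contained and makes the exclusive/exhaustive dichotomies explicit. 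Your remark that $x_1$ must be excluded (so the equalities hold for $n\ge 2$, or for $\pscript\smallsetminus\brac{x_1}$) is a genuine caveat the paper glosses over.
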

\begin{proof}
This follows from Corollary~\ref{cor32}.
\end{proof}

The next corollary shows that if $x\in\rmant\cup\rmmat$ and $y\in\rmmix$ with $x\to y$, then $p(x\to y)$ has a simple form

\begin{cor}       
\label{cor34}
{\rm (i)}\enspace If $x\in\rmant$, $y\in\rmmix$ with $x\to y$, then
\begin{equation*}
p(x\to y)=\frac{m(x\to y)c_0}{\lambda _c(n,0)}
\end{equation*}
{\rm (ii)}\enspace If $x\in\rmmat$, $y\in\rmmix$ with $x\to y$, then
\begin{equation*}
p(x\to y)=\frac{m(x\to y)c_1}{\lambda _c(n,0)}
\end{equation*}
\end{cor}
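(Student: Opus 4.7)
The plan is to reduce everything to computing the number of ancestors $\alpha$ and the number of parents $\pi$ of the new maximal vertex $a = y\setminus x$, since the general formula
\begin{equation*}
p_c(x\to y)=m(x\to y)\frac{\lambda_c(\alpha,\pi)}{\lambda_c(n,0)}
\end{equation*}
will then yield both identities once one notices that $\lambda_c(s,s)=c_s$. The whole argument therefore reduces to two combinatorial observations, each driven by comparing the minimal-vertex structure of $x$ and $y$ via Theorem~\ref{thm31}.

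For part (i), I would argue as follows. Because $x\in\rmant$, Theorem~\ref{thm31}(iv) gives exactly one minimal vertex in $x$, while $y\in\rmmix$ has at least two by Theorem~\ref{thm31}(iii). Since adjoining a maximal vertex $a$ cannot destroy or create new minimality of pre-existing vertices, the set of minimal vertices of $y$ is either the minimal set of $x$ or that set together with $a$ (the latter occurring precisely when $a$ has no parents in $y$). The count forces $a$ itself to be the new minimal vertex, so $\pi=\alpha=0$. Evaluating $\lambda_c(0,0)=c_0$ gives the claim.

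For part (ii), $x\in\rmmat$ has at least two minimal vertices and no minimal vertex of $x$ possesses a single-parent child, by Theorem~\ref{thm31}(v). On the other hand $y\in\rmmix$ must, by Theorem~\ref{thm31}(iii), contain a minimal vertex carrying a single-parent child. Since the minimal set and the below-structure of the old vertices is unchanged by adjoining $a$, the only way a single-parent child can appear in $y$ is that $a$ itself is a single-parent child of some minimal vertex $v$ of $x$ (hence of $y$). Thus $\pi=1$, and the only ancestor of $a$ is $v$ — any other ancestor would have to lie strictly below the parent $v$, contradicting minimality of $v$ — so $\alpha=1$ as well. Then $\lambda_c(1,1)=c_1$ yields the formula.

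The only step with any substance is the argument in (ii) that $a$'s sole parent is forced to be one of the original minimal vertices of $x$ and that $a$ has no further ancestors; the rest is a bookkeeping check. I expect no real obstacle beyond being careful that ``minimal vertex of $x$'' and ``minimal vertex of $y$'' coincide modulo the possible addition of $a$, which is immediate because $a$ is adjoined as a maximal element and therefore does not sit below any vertex of $x$.
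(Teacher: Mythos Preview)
Your proof is correct and follows essentially the same route as the paper's: in each part you compare the minimal-vertex structure of $x$ and $y$ via Theorem~\ref{thm31}(iii)--(v) to force $(\alpha,\pi)=(0,0)$ in case~(i) and $(\alpha,\pi)=(1,1)$ in case~(ii), then evaluate $\lambda_c(\alpha,\pi)$. Your write-up is in fact a bit more careful than the paper's in spelling out why adjoining a maximal element preserves the minimality status of old vertices and why $\alpha=1$ in~(ii), but the underlying argument is the same.
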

\begin{proof}
Suppose $x$ produces $y$ by adjoining the maximal vertex $a$ to $x$.\newline
(i)\enspace By Theorem~\ref{thm31} (iv), $x$ has only one minimal vertex $b$ and, of course, $b$ is still minimal in
$y$. Since $y\in\rmmix$, by Theorem~\ref{thm31} (iii) $y$ must have at least two minimal elements so $a$ is maximal and minimal in $y$. Thus, $a$ is isolated in $y$ so $a$ has no parents and no ancestors. Hence,
$\lambda _c(\alpha ,\pi )=c_0$ and the result follows.\newline
(ii)\enspace By Theorem~\ref{thm31} (v) $x$ has minimal vertices $b_1,\ldots ,b_n$, $n\ge 2$, none of which has a single-parent child. Since $y\in\rmmix$ by Theorem~\ref{thm31} (iii), $a$ must be a single-parent child of precisely one of the $b_1,\ldots ,b_n$ in $y$. Thus, $x$ has one parent and one ancestor so $\lambda _c(\alpha ,\pi )=c_1$ and the result follows.
\end{proof}

There are indications that in the partition of $\pscript _n$ in Corollary~\ref{cor33} (i) that $\rmmix _n$ dominates
$\rmant _n$ and $\rmmat _n$ both numerically and probabilistically for large $n$. One reason for this is that if $x\to y$ then

\begin{list} {(\arabic{cond})}{\usecounter{cond}
\setlength{\rightmargin}{\leftmargin}}
\item $y\in\rmmix$ when $x\in\rmmix$
\item $y\in\rmant\cup\rmmix$ when $x\in\rmant$
\item $y\in\rmmat\cup\rmmix$ when $x\in\rmmat$.
\end{list}

The next result shows that the part of $\rmmix$ in (2) and (3) is nonempty. This shows that $\rmmix$ continues to
grow compared to $\rmant$ and $\rmmat$ as $n$ increases.

\begin{thm}       
\label{thm35}
{\rm (i)}\enspace If $x\in\rmant$, then $x$ produces precisely one offspring in $\rmmix$.
{\rm (ii)}\enspace If $x\in\rmmat$ has $m$ minimal vertices, then $x$ produces precisely $m$ offspring (including multiplicity) in $\rmmix$.
\end{thm}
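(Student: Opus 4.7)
The plan is to reduce both parts to the minimal-vertex criteria from Theorem~\ref{thm31}, and to track exactly how adjoining a new maximal vertex $a$ to $x$ affects the set of minimal vertices and the single-parent property. The key structural observation throughout is: when we pass from $x$ to $y$ by adjoining a maximal $a$, the minimal vertices of $x$ remain minimal in $y$, and $a$ is minimal in $y$ if and only if $a$ is adjoined as an isolated vertex (no ancestors). Also, $a$ itself has no children in $y$, so a single-parent child of a minimal vertex of $y$ must lie in $x \cup \{a\}$, and the only way a minimal vertex of $y$ can \emph{acquire} a new single-parent child through the transition is to acquire $a$ itself as its sole offspring.

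For (i), I would start from Theorem~\ref{thm31}(iv), giving $x$ a unique minimal vertex $b$, and from Theorem~\ref{thm31}(i), giving $b$ a single-parent child. For any offspring $y$, I would argue that $y\in\rmmix$ forces $y$ to have at least two minimal vertices, so $a$ must be minimal in $y$, i.e.\ isolated. Conversely, adjoining $a$ as an isolated vertex produces a $y$ with minimal vertices $\{a,b\}$, and the pre-existing single-parent child of $b$ persists, so Theorem~\ref{thm31}(iii) puts $y\in\rmmix$. Since ``adjoin $a$ as an isolated vertex'' specifies $y$ uniquely up to isomorphism with multiplicity one, this yields precisely one offspring in $\rmmix$.

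For (ii), I would use Theorem~\ref{thm31}(v): $x$ has $m\ge 2$ minimal vertices $b_1,\dots,b_m$, and none has a single-parent child. For any offspring $y$, the requirement $y\in\rmmix$ demands a minimal vertex of $y$ with a single-parent child; by the preceding observation this must be some $b_i$, and the single-parent child must be $a$ itself. Hence $a$'s unique parent is some $b_i$, and since $b_i$ is minimal it has no ancestors, which forces $b_i$ to be $a$'s \emph{only} ancestor as well. Conversely, adjoining $a$ with exactly one of the $b_i$'s as its sole ancestor makes $b_i$ a minimal vertex of $y$ with single-parent child $a$, so $y\in\rmmix$ by Theorem~\ref{thm31}(iii). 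There are $m$ such choices of $b_i$, giving $m$ production events; counted with multiplicity (grouping together the ones that yield isomorphic offspring), the total is $m$.

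The main subtlety is part (ii), specifically the implication ``single parent among the $b_i$'s $\Rightarrow$ single ancestor.'' It is easy to slip into the case-split over whether $a$'s parent might be a non-minimal vertex or whether $a$ might have extra ancestors while still having a minimal parent; the clean way around this is the minimality of the $b_i$'s, which automatically prunes away any non-parent ancestors. Everything else is bookkeeping using Theorem~\ref{thm31}.
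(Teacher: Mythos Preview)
Your argument is correct and follows the same route as the paper's proof: both parts reduce to the minimal-vertex characterizations of Theorem~\ref{thm31} and track how adjoining a maximal vertex interacts with minimality and the single-parent-child property. Your version is in fact more complete than the paper's, which asserts the uniqueness direction in (i) and the ``precisely $m$'' count in (ii) without spelling out why no other adjunction of $a$ can land in $\rmmix$; your structural observation in the opening paragraph and the ``single parent $b_i$ minimal $\Rightarrow$ single ancestor'' step are exactly the missing justifications.
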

\begin{proof}
(i)\enspace By Theorem~\ref{thm31} (iv), $x$ has precisely one minimal vertex and this vertex must have a
single-parent child. If we adjoin an isolated vertex to $x$ then by Theorem~\ref{thm31} (iii) the produced causet is mixed. Also, this is the only offspring of $x$.
(ii)\enspace By Theorem~\ref{thm31} (v), none of the minimal vertices has a single-parent child. By adjoining a
single-parent child to one of the minimal vertices, $x$ produces a mixed offspring according to
Theorem~\ref{thm31} (iii). This can be done in precisely $m$ ways.
\end{proof}

A path $\omega =\omega _1\omega _2\cdots\in\Omega$ is an \textit{antimatter} (\textit{matter}) \textit{path} if
$\omega _i\in\rmant (\rmmat )$, $i=2,3,\ldots\,$. A path $\omega =\omega _1\omega _2\cdots\in\Omega$ is
\textit{mixed} if $\omega _i\in\rmmix$ for some $i\in\brac{3,4,\ldots}$. Of course, if $\omega _i\in\rmmix$ then
$\omega _{i+1},\omega _{i+2},\ldots\in\rmmix$. We denote the sets of antimatter, matter and mixed paths by
$\Omega ^{\rmant}$, $\Omega ^{\rmmat}$, $\Omega ^{\rmmix}$, respectively. By restricting $i$ to $1,2,\ldots ,n$ we obtain the sets of $n$-paths $\Omega _n^{\rmant},\Omega _n^{\rmmat},\Omega _n^{\rmmix}\in\Omega _n$ in the natural way. It is clear that $\Omega ^{\rmant}$, $\Omega ^{\rmmat}$ and $\Omega ^{\rmmix}$ are mutually disjoint and
\begin{equation*}
\Omega =\Omega ^{\rmant}\cup\Omega ^{\rmmix}\cup\Omega ^{\rmmat}
\end{equation*}
Similarly, $\Omega _n^{\rmant}$, $\Omega _n^{\rmmat}$ and $\Omega _n^{\rmmix}$ are mutually disjoint and
\begin{equation*}
\Omega _n=\Omega _n^{\rmant}\cup\Omega _n^{\rmmix}\cup\Omega _n^{\rmmat}
\end{equation*}

Since $\rmcyl (\Omega _{n+1}^{\rmant})\subseteq\rmcyl (\Omega _n^{\rmant})$, $n=2,3,\ldots$, and
$\Omega ^{\rmant}=\cap\rmcyl (\Omega _n^{\rmant})$ we conclude that $\Omega ^{\rmant}\in\ascript$ and
\begin{equation*}
\nu _c(\Omega ^{\rmant})=\lim _{n\to\infty}p_c^n(\Omega _n^{\rmant})
\end{equation*}
Similarly, $\rmcyl (\Omega _{n+1}^{\rmmat})\subseteq\rmcyl (\Omega _n^{\rmmat})$, $n=2,3,\ldots$, and
$\Omega ^{\rmmat}=\cap\rmcyl (\Omega _n^{\rmmat})$ so that $\Omega ^{\rmmat}\in\ascript$ and
\begin{equation*}
\nu _c(\Omega ^{\rmmat})=\lim _{n\to\infty}p_c^n(\Omega _n^{\rmmat})
\end{equation*}
Moreover, $\rmcyl (\Omega _n^{\rmmix})\subseteq\rmcyl (\Omega _{n+1}^{\rmmix})$, $n=2,3,\ldots$, and
$\Omega ^{\rmmix}=\cup\rmcyl (\Omega _n^{\rmmix})$ so that $\Omega ^{\rmmix}\in\ascript$ and
\begin{equation*}
\nu _c(\Omega ^{\rmmix})=\lim _{n\to\infty}p_c^n(\Omega _n^{\rmmix})
\end{equation*}

\section{Probabilities and Propensities} 
This section computes some probabilities and notes some $q$-propensities and trends that seem to hold. One of the few probabilities that we have been able to compute precisely is $p_c^n(\Omega _n^{\rmant})$. Since
$\Omega _2^{\rmant}=\brac{x_1x_2}$ we have
\begin{equation*}
p_c^2(\Omega _2^{\rmant})=p_c(x_1\to x_2)=\frac{c_1}{c_0+c_1}
\end{equation*}
By Corollary~\ref{cor34} (i) we have
\begin{align*}
p_c^3(\Omega _3 ^{\rmant})&=p_c^3(x_1x_2x_4)+p_c^3(x_1x_2x_5)\\
  &=p_c(x_1\to x_2)\sqbrac{p_c(x_2\to x_4)+p_c(x_2\to x_5)}\\
  &=p_c(x_1\to x_2)\sqbrac{1-p_c(x_2\to x_6)}=\frac{c_1}{c_0+c_1}\sqbrac{1-\frac{c_0}{c_0+2c_1+c_2}}
\end{align*}
Since
\begin{equation*}
\Omega _4^{\rmant}=
\brac{x_1x_2x_4x_9,x_1x_2x_4x_{10},x_1x_2x_4x_{11},x_1x_2x_5x_{11},x_1x_2x_5x_{12},x_1x_2x_5x_{13}}
\end{equation*}
by Corollary~\ref{cor34}(i) we have
\begin{align*}
p_c^4&(\Omega _4^{\rmant})\\
  &=p_c(x_1\!\to\!x_2)\left\{p_x(x_2\to x_4)\sqbrac{p_c(x_4\to x_9)+p_c(x_4\to x_{10})+p_c(x_4\to x_{11})}\right.\\
  &\quad +\left.p_c(x_2\to x_5)\sqbrac{p_c(x_5\to x_{11})+p_c(x_5\to x_{12})+p_c(x_5\to x_{13})}\right\}\\
  &=p_c(x_1\to x_2)\left\{p_c(x_2\to x_4)\sqbrac{1-p_c(x_4\to x_{14})}\right.\\
  &\quad \left.+p_c(x_2\to x_5)\sqbrac{1-p_c(x_5\to x_{15})}\right\}\\
  &=p_c(x_1\to x_2)\sqbrac{1-p_c(x_2\to x_6)}\sqbrac{1-p_c(x_4\to x_{14})}\\
  &=\frac{c_1}{c_0+c_1}\paren{1-\frac{c_0}{c_0+2c_1+c_2}}\paren{1-\frac{c_0}{c_0+3c_1+3c_2+c_3}}
\end{align*}
Continuing by induction we obtain
\begin{equation}         
\label{eq41}
p_c^n(\Omega _n^{\rmant})=\frac{c_0}{c_0+c_1}\paren{1-\frac{c_0}{c_0+2c_1+c_2}}
  \cdots\paren{1-\frac{c_0}{\displaystyle\sum _{i=0}^{n-1}\binom{n-1}{i}c_i}}
\end{equation}
The sequence \eqref{eq41} converges to $\nu _c(\Omega ^{\rmant})$.

To get an idea of the probabilities $\nu _c(\Omega ^{\rmant})$ consider a percolation dynamics with
$c_k=c^k$, $c>0$. Then \eqref{eq41} becomes
\begin{equation}         
\label{eq42}
p_c^n(\Omega _n^{\rmant})=\frac{c_1}{1+c}\sqbrac{1-\frac{1}{(1+c)^2}}\sqbrac{1-\frac{1}{(1+c)^3}}
  \cdots\sqbrac{1-\frac{1}{(1+c)^{n-1}}}
\end{equation}
Table~1 gives $\nu _c(\Omega ^{\rmant})$ for various values of $c$. Notice that, as a function of $c$,
$\nu _c(\Omega ^{\rmant})$ increases monotonically until it reaches a maximum of $0.3035$ at about
$c=1.3$ and then decreases monotonically.
\vglue 2pc

\begin{center}
\begin{tabular}{c|c|c|c|c|c|c|c|c}
$c$&$0.2$&$0.3$&$0.5$&$0.7$&$0.9$&$1.0$&$1.1$&$1.2$\\
\hline
$\nu _c(\Omega ^{\rmant})$&$0.00357$&$0.03122$&$0.1385$&$0.2264$
  &$0.2753$&$0.2888$&$0.2972$&$0.3018$\\
\end{tabular}
\end{center}
\vskip 1pc
\begin{center}
\begin{tabular}{c|c|c|c|c|c|c|c}
$c$&$1.3$&$1.4$&$1.5$&$1.6$&$2.0$&$3.0$&$4.0$\\
\hline
$\nu _c(\Omega ^{\rmant})$&$0.3035$&$0.3031$&$0.3012$&$0.2982$
  &$0.2801$&$0.2295$&$0.1901$\\
\noalign{\bigskip}
\multicolumn{8}{c}%
{\textbf{Table 1}}\\
\end{tabular}
\end{center}
\vglue 2pc

We suspect that for this model, the $q$-propensities will be smaller. For instance in Example~1, the maximum values of the $q$-propensity is
\begin{equation*}
\mu(\Omega ^{\rmant})=(0.3035)^2=0.09211
\end{equation*}

We next consider the factorial dynamics $c_k=1/k!$. In this case \eqref{eq41} becomes
\bigskip
\begin{equation*}
p_c^n(\Omega _n^{\rmant})=\frac{1}{2}\prod _{j=2}^{n-1}
  \sqbrac{1-\frac{1}{\displaystyle\sum _{i=0}^j\binom{j}{i}\frac{1}{i!}}}
\end{equation*}
\bigskip

\noindent Table~2 gives $p_c^n(\Omega _n^{\rmant})$ for various values of $n$.
\vglue 1pc

\bigskip\parindent=0pt
\begin{center}
\begin{tabular}{c|c|c|c|c}
$n$&$10$&$20$&$50$&$100$\\
\hline
$p_c^n(\Omega _n^{\rmant})$&$0.20793$&$0.19522$&$0.19363$&$0.19362$\\
\noalign{\medskip}
\multicolumn{5}{c}%
{\textbf{Table 2}}\\
\end{tabular}
\end{center}
\bigskip\parindent=18pt

\noindent According to Mathematica
\begin{equation*}
\nu_c(\Omega ^{\rmant})=\lim _{n\to\infty}p_c^n(\Omega _n^{\rmant})=0.19361
\end{equation*}
We have found $\nu _c(\Omega ^{\rmant})$ for other coupling constants. For $c_k=1/(k!)^k$,
$c_k=1/((k+1)^k$, $c_k=1/(k!)^{50}$ we have that $\nu _c(\Omega ^{\rmant})$ equals $0.093313$, $0.032498$,
$0.0049751$, respectively.

Due to multiplicities, the situation for $p_c^n(\Omega _n^{\rmmat})$ becomes much more complicated and we have not found this quantity precisely. However, we can point out some tendencies. Proceeding as we did for
$\Omega _n^{\rmant})$ we obtain
\begin{align*}
p_c^2(\Omega _2^{\rmmat})&=\frac{c_0}{c_0+c_1}\\
  p_c^3(\Omega _3^{\rmmat})&=\frac{c_0}{c_0+c_1}
  \sqbrac{1-\frac{2c_1}{c_0+2c_1+c_2}}\\\noalign{\medskip}
  p_c^4(\Omega _4^{\rmmat})&=\frac{c_0}{c_0+c_1}
  \sqbrac{1-\frac{c_1}{c_0+2c_1+c_2}\paren{1+\frac{2c_2+3c_0}{c_0+3c_1+3c_2+c_3}}}
\end{align*}
and the terms become increasingly more complicated.

We now compare the first few terms of $p_c^n(\Omega _n^{\rmmat})$ with $p_c^n(\Omega _n^{\rmant})$. For the percolation dynamics with $c=1$ we have
\begin{align*}
p_c^2(\Omega _2^{\rmmat})&=p_c^2(\Omega _2^{\rmant})=\tfrac{1}{2}\\
  p_c^3(\Omega _3^{\rmmat})&=\tfrac{1}{4},\quad p_c^3(\Omega _3^{\rmant})=\tfrac{5}{14}\\
  p_c^4(\Omega _4^{\rmmat})&=\tfrac{19}{64},\quad p_c^4(\Omega _4^{\rmant})=\tfrac{21}{64}\\
\end{align*}
For the factorial dynamics $c_k=1/k!$ we have
\begin{align*}
p_c^2(\Omega _2^{\rmmat})&=p_c^2(\Omega _2^{\rmant})=\tfrac{1}{2}\\
  p_c^3(\Omega _3^{\rmmat})&=\tfrac{3}{14},\quad p_c^3(\Omega _3^{\rmant})=\tfrac{3}{8}\\
  p_c^4(\Omega _4^{\rmmat})&=\tfrac{61}{238},\quad p_c^4(\Omega _4^{\rmant})=\tfrac{70}{238}\\
\end{align*}
The previous data indicates that the tendency is for $p_c^n(\Omega _n^{\rmmat})<p_c^n(\Omega _n^{\rmant})$. This is also consistent with Theorem~\ref{thm35} with shows that $\ab{\Omega _n^{\rmmat}}$ is much smaller than
$\ab{\Omega _n^{\rmant}}$ for large $n$. When this tendency holds, both $\nu _c(\Omega ^{\rmmat})$ and
$\nu _c(\Omega ^{\rmant})$ are small and this indicates that $\mu (\Omega ^{\rmmat})$ and $\mu (\Omega ^{\rmant})$ are also likely to be small.

\begin{exam}{3}
This example gives further evidence for the tendency $p_c^n(\Omega _n^{\rmmat})<p_c^n(\Omega _n^{\rmant})$. Let $\omega _a^n\in\Omega _n$ be the extreme antimatter path $\omega _a^n=x_1x_2x_4x_9\cdots y$. Then
\begin{equation*}
p_c^n\paren{\omega _a^n}=\paren{\frac{c_1}{c_0+c_1}}\paren{\frac{c_1+c_2}{c_0+2c_1+c_2}}\cdots
  \paren{\frac{\displaystyle\sum _{j=1}^n\binom{n-1}{j-1}c_j}{\displaystyle\sum _{j=0}^n\binom{n}{j}c_j}}
\end{equation*}
If $\omega _m^n\in\Omega _n$ is the extreme matter path $\omega _m^n=x_1x_3x_8x_{24}\cdots z$ we have
\begin{equation*}
p_c^n\paren{\omega _m^n}=\paren{\frac{c_0}{c_0+c_1}}\paren{\frac{c_0}{c_0+2c_1+c_2}}\cdots
  \paren{\frac{c_0}{\displaystyle\sum _{j=0}^n\binom{n}{j}c_j}}
\end{equation*}
For the percolation dynamics $c_k=c^k$ we have
\begin{align*}
p_c^n(\omega _a^n)&=\paren{\frac{c}{1+c}}^n\\
  p_c^n(\omega _m^n)&=\frac{c^n}{(1+c)^{n(n+1)/2}}=\paren{\frac{c}{1+c}}^n\frac{1}{(1+c)^{n(n-1)/2}}
\end{align*}
Denoting the ``complete'' paths by $\omega _a,\omega _m\in\Omega$ we conclude that
\begin{align*}
\nu_c(\omega _a)&=\lim _{n\to\infty}p_c^n(\omega _a^n)=0\\
\nu_c(\omega _m)&=\lim _{n\to\infty}p_c^n(\omega _m^n)=0\\
\end{align*}
It follows that $\brac{\omega _a},\brac{\omega _m}\in\sscript (\rho _n)$ and $\mu (\omega _a)=\mu (\omega _m)=0$ for any $q$-propensity in the percolation dynamics. Moreover, not only is $p_c^n(\omega _m^n)$ much smaller than
$p_c^n(\omega _a^n)$ for large $n$ but we have
\begin{equation*}
\lim _{n\to\infty}\frac{p_c^n(\omega _m^n)}{p_c^n(\omega _a^n)}=0
\end{equation*}
These same general results hold for the factorial dynamical $c_k=1/(k!)$.
\end{exam}

As we have seen, there are indications that $\Omega ^{\rmmix}$ dominates $\Omega ^{\rmant}$ and
$\Omega ^{\rmmat}$. We now examine $\Omega ^{\rmmix}$ more closely. We can partition $\Omega ^{\rmmix}$ into two parts
\begin{align*}
\Omega _{\rma}^{\rmmix}&=\brac{\omega =\omega _1\omega _2\cdots\in\Omega ^{\rmmix}\colon\omega _2=x_2}\\
\Omega _{\rmm}^{\rmmix}&=\brac{\omega =\omega _1\omega _2\cdots\in\Omega ^{\rmmix}\colon\omega _2=x_3}
\end{align*}
and as usual we define $\Omega _{\rma ,n}^{\rmmix}$, $\Omega _{\rmm ,n}^{\rmmix}$ as the truncated $n$-paths from
$\Omega _{\rma}^{\rmmix}$, $\Omega _{\rmm}^{\rmmix}$, respectively. Since
$\Omega _{\rma ,3}^{\rmmix}=\brac{x_1x_2x_6}$ and $\Omega _{\rmm ,3}^{\rmmix}=\brac{x_1x_3x_6}$, we have that
\begin{align*}
p_c^3(\Omega _{\rma ,3}^{\rmmix})&=\frac{c_0c_1}{(c_0+c_1)(c_0+2c_1+c_2)}\\
p_c^3(\Omega _{\rmm ,3}^{\rmmix})&=\frac{2c_0c_1}{(c_0+c_1)(c_0+2c_1+c_2)}
\end{align*}
Hence, for any value of the coupling constants we have
\begin{equation*}
\frac{p_c^3(\Omega _{\rmm ,3}^{\rmmix})}{p_c^3(\Omega _{\rma ,3}^{\rmmix})}=2
\end{equation*}
Continuing we have that
\begin{align*}
\Omega _{\rma ,4}^{\rmmix}&=\left\{x_1x_2x_4x_{14},x_1x_2x_5x_{15}, x_1x_2x_6x_{14},x_1x_2x_6x_{15},\right.\\
 &\qquad \left.x_1x_2x_6x_{16},x_1x_2x_6x_{17},x_1x_2x_6x_{18},x_1x_2x_6x_{19}\right\}\\
\Omega _{\rmm ,4}^{\rmmix}&=\left\{x_1x_3x_6x_{14},x_1x_3x_6x_{15}, x_1x_3x_6x_{16},x_1x_3x_6x_{17},\right.\\
 &\qquad \left.x_1x_3x_6x_{18},x_1x_3x_6x_{19},x_1x_3x_7x_{18},x_1x_3x_8x_{19}\right\}
\end{align*}
Hence,
\begin{align*}
p_c^4(\Omega _{\rma ,4}^{\rmmix})&=p_c^3(x_1x_2x_6)+p_c^4(x_1x_2x_4x_{14})+p_c^4(x_1x_2x_5x_{15})\\
  &=\frac{c_0c_1}{\lambda _c(1,0)\lambda _c(2,0)}
  +\frac{c_0c_1(2c_1+c_2)}{\lambda _c(1,0)\lambda _c(2,0)\lambda _c(3,0)}\\\noalign{\medskip}
p_c^4(\Omega _{\rmm ,4}^{\rmmix})&=p_c^3(x_1x_3x_6)+p_c^4(x_1x_3x_7x_{18})+p_c^4(x_1x_3x_8x_{19})\\
  &=\frac{2c_0c_1}{\lambda _c(1,0)\lambda _c(2,0)}
  +\frac{c_0c_1(3c_1+2c_2)}{\lambda _c(1,0)\lambda _c(2,0)\lambda _c(3,0)}
\end{align*}
and the ratio becomes
\begin{equation*}
\frac{p_c^4(\Omega _{\rmm ,4}^{\rmmix})}{p_c^4(\Omega _{\rma ,4}^{\rmmix})}
  =\frac{5c_0+6c_1+8c_2+2c_3}{c_0+5c_1+4c_2+c_3}
\end{equation*}
In the percolation dynamics the ratio $r_c^4$ becomes
\begin{equation*}
r_c^4=\frac{5+6c+8c^2+2c^3}{1+5c+4c^2+c^3}
\end{equation*}
In particular, $r_1^4=21/11$ and for $c\ge 1$, $r_c^4\approx 2$. Moreover, for small $c$ we have $r_c^4\approx 5$. For the factorial dynamics $c_k=1/k!$ the ratio becomes $r^4\approx 1.88$.

We shall not continue this line of thought to find $p_c^5(\Omega _{\rma ,5}^{\rmmix})$ and
$p_c^5(\Omega _{\rmm ,5}^{\rmmix})$ because it becomes much more involved. However, we can observe that
\begin{align*}
\nu _c(\Omega _{\rma}^{\rmmix})&=\lim _{n\to\infty}p_c^n(\Omega _{\rma ,n}^{\rmmix})\\
\nu _c(\Omega _{\rmm}^{\rmmix})&=\lim _{n\to\infty}p_c^n(\Omega _{\rmm ,n}^{\rmmix})
\end{align*}
For any coupling constants, it is clear that
\begin{equation*}
p_c^n(\Omega _{\rmm ,n}^{\rmmix})>p_c^n(\Omega _{\rma ,n}^{\rmmix})
\end{equation*}
so that the ratio
\begin{equation*}
r=\frac{\nu _c(\Omega _{\rmm}^{\rmmix})}{\nu _c(\Omega _{\rma}^{\rmmix})}>1
\end{equation*}
We conjecture that $r$ is relatively large. Moreover, once we know the DQP $\rho _n$, $n=1,2,\ldots$, we conjecture that $\Omega _{\rma}^{\rmmix},\Omega _{\rmm}^{\rmmix}\in\sscript (\rho _n)$ and that $r$ is related to
$\mu (\Omega _{\rmm}^{\rmmix})/\mu(\Omega _{\rma}^{\rmmix})$. Finally this latter ratio may be the ratio of matter to antimatter in our physical universe.

\end{document}